\newcommand{\eat}[1]{}
\newcommand{\PRP}{\textsf{PRP}}
\newcommand{\RMI}{\textsf{R-Multi}}
\newcommand{\RUI}{\textsf{R-Unrel}}
\newcommand{\RSI}{\textsf{R-Submod}}
\newcommand{\RDC}{\textsf{RDC}}
\newcommand{\RGC}{\textsf{RGC}}
\newcommand{\RXOS}{\textsf{RXOS}}
\newcommand{\LP}{\mathsf{LP}}
\newcommand{\SC}{\mathsf{SC}}
\newcommand{\OPT}{\mathsf{OPT}}
\newcommand{\LPOpt}{\mathsf{OPT}_{\mathsf{LP}}}
\newtheorem{theorem}{Theorem}
\newtheorem{lemma}[theorem]{Lemma}
\newtheorem{definition}[theorem]{Definition}
\newtheorem{example}[theorem]{Example}
\title{Ranking with Diverse Intents and Correlated Contents}
\author{Jian Li\\ lijian83@mail.tsinghua.edu.cn\\ IIIS\\ Tsinghua University
        \and Zeyu Zhang\\ zyzhang92@gmail.com\\ Department of Mathematics\\ Tsinghua University}
\begin{document}
\maketitle

\begin{abstract}
We consider the following document ranking problem: We have a collection of documents, each containing some topics (e.g. sports, politics, economics). We also have a set of users with diverse interests. Assume that user $u$ is interested in a subset $I_u$ of topics. Each user $u$ is also associated with a positive integer $K_u$,
which indicates that $u$ can be satisfied
by any $K_u$ topics in $I_u$. Each document $s$ contains information for a subset $C_s$ of topics. The objective is to pick one document at a time such that the average \textsl{satisfying time} is minimized, where a user's satisfying time is the first time that at least $K_u$ topics in $I_u$ are covered in the documents selected so far.

Our main result is an $O(\rho)$-approximation algorithm for the problem, where $\rho$ is the algorithmic integrality gap of the linear programming relaxation of the set cover instance defined by the documents and topics. This result generalizes the constant approximations for generalized min-sum set cover and ranking with unrelated intents and the logarithmic approximation for the problem of ranking with submodular valuations (when the submodular function is the coverage function),
and can be seen as an interpolation between these results.
We further extend our model to the case when each user may interest in more than one sets of topics and when the user's valuation function is XOS, and obtain similar results for these models.
\end{abstract}

\section{Introduction}

\subsection{Background}
In a typical information retrieval application, we have a set of users and a set of documents.
Each user issues a query and we would like to present the user with a rank list of the documents.
Hopefully, the top-ranked documents are relevant to the user and our general objective
is to maximize the overall user satisfaction.
In many IR applications, the {\em probabilistic ranking principle} (\PRP)
is considered as a common rule of thumb to rank the documents
~\cite{robertson1993probability}.
\PRP\ states that we should rank the documents in descending order by their probability of relevance and
it is the ``optimal'' way to rank the documents in the sense that
\PRP\ minimizes the expected loss (also known as the  Bayes risk) under 1/0 loss~\cite{manning2008introduction}.
However, the 0/1 loss metric does not directly relate to the users' satisfaction
and sometimes the ranking given by \PRP\ is clearly suboptimal.
Indeed, even the original paper \cite{robertson1993probability} provided
such an example (the example was discovered by W.S.Cooper).

\begin{example}
\label{ex:1}
\cite{robertson1993probability}
The class of users consists of two subclasses $\mathcal{U}_1$ and $\mathcal{U}_2$.
$\mathcal{U}_1$ has 100 users and $\mathcal{U}_2$ has 50 users.
Any user from $\mathcal{U}_1$ would be satisfied with any document $s_1$--$s_9$,
but no others.
Any user from $\mathcal{U}_2$ would be satisfied with only $s_{10}$.
If we consider any document $s_1$--$s_9$ on its own,
it has a probability of $2/3$ of being relevant to the next user
(the ranking algorithm does not know which subclass the user belongs to).
Similarly, $s_{10}$ has a probability of $1/3$ of being relevant.
Therefore, by \PRP, the ranking should be $s_1, s_2, \ldots, s_9, s_{10}$.
But this means that $\mathcal{U}_1$ users can be satisfied with $s_1$ while $\mathcal{U}_2$ users
have to see nine irrelevant documents before they retrieve $s_{10}$.
Consider the ranking $s_1, s_{10}, s_2, \ldots, s_9$.
$\mathcal{U}_1$ users are still satisfied by the first document, but $\mathcal{U}_2$
users are satisfied with the second document, which is much better
than the ranking defined by \PRP.
\end{example}

The action of placing several documents aiming at different types of users
at the top positions of the rank list (e.g. place $s_{1}$ and $s_{2}$ as the top-2 in the above example)
is called {\em diversification}.
It is a widely accepted fact that {\em diversification} of the ranking result is helpful
in minimizing the risk of user dissatisfaction
in a multiuser scenario
(See, e.g., \cite{carbonell1998use,chen2007addressing,radlinski2008learning, agrawal2009diversifying, sigir09wang, drosou2010search, gollapudi2009axiomatic,azar2009multiple}).
Example~\ref{ex:1} is a simple yet instructive illustration why the diverse intents
and the correlations of the documents
($s_1$--$s_9$ are correlated in a way that any of them could satisfy a $\mathcal{U}_1$ user)
are the major reasons for diversification.

\begin{enumerate}
\item \underline{Diverse intentions:}
Different users may have different intents towards the same query (e.g., a keyword).
However, the ranking algorithm does not know the actual type of an individual user
but has to use the same ranking function for the same query.
In Example~\ref{ex:1}, there are two user types $\mathcal{U}_1$ and $\mathcal{U}_2$, and the next user could be either of them.
Considering another real life example, the keyword ``Michael Jordan" may refer to
the famous NBA player in one query, and the U.C. Berkeley Professor in another search.
\item \underline{Correlations among documents.}
Typically, the utility a user can obtain from a set of documents is not the sum of the utilities from individual documents in the set.
This is because of the similarity (or dissimilarity) of the documents. For instance, the utility of two very similar documents is not much more than
the utility of one of them
(e.g., documents $s_1$ and $s_2$ in Example~\ref{ex:1}).
Such correlations can be seen as another cause of diversification of the ranking result (see e.g.,~\cite{agrawal2009diversifying, sigir09wang, drosou2010search}).
\end{enumerate}

\subsection{Problem Formulation}

In this section, we propose our model for diversification, which captures both the diversity of users' intents
and the correlations of the documents.

\begin{definition}
\textsl{Ranking with Diverse Intents and Correlated Contents} (\RDC):
Here, we have a set $\mathcal{U}$ of users, a set $\mathcal{S}$ of documents, and a set $\mathcal{E}$ of topics. Each user $u$ is interested in a subset $I_u$ of topics. Each user $u$ is also associated with a positive integer $K_u$ which is less or equal to $|I_u|$. Each document $s$ contains a subset $C_s$ of topics and $\mathcal{E}=\bigcup_{s\in \mathcal{S}}C_s$. The objective is to pick an ordering of all documents
such that the average \textsl{satisfying time} is minimized, where a user's satisfying time
$t_u=\min\{t \mid \text{ at least }K_u\text{ topics in }I_u\text{ are covered by the first }t\text{ selected documents}\}.$
\end{definition}




It is not hard to see that our \RDC\ model captures both the diversity of the users' intents
(i.e., each user is interested in a different subset of topics)
and the correlations among documents
(i.e., different documents may have some common topics).
Now, we discuss some closely related prior work and their relations with our model.
\begin{enumerate}
\item \underline{Ranking with multiple intents (\RMI)} \cite{azar2009multiple}:
Azar et al. proposed the following combinatorial model
to capture the diversity of user preferences.
We have a set $\mathcal{U}$ of users
and a set $\mathcal{S}$ of documents.
User $u$ can be satisfied with any
$K_u$ document from a subset $I_u$ of documents.
The objective is the same as ours, to minimize the cumulative users' satisfying time.
We can see that it is a special case of \RDC\ where each document contains a distinct topic.
\item \underline{Ranking with unrelated intents (\RUI)} \cite{azar2010rankingunrelated}:
This model is a generalization of \RMI.
For each user $u$ and a document $s$,
there is a nonnegative number $A_{us}$ that is the amount of utility
$u$ can get from $s$.
$u$ is satisfied if she accumulates at least $K_u$ units of utility.
The objective is same as \RMI.
It is also not hard to see that
\RUI\ is a special case of \RDC\ where each document contains $A_{us}$ distinct topics.
\item \underline{Ranking with submodular intents (\RSI)} \cite{azar2011ranking}:
The model is a generalization of both \RMI\ and \RUI.
For each user $u$,
there is a nonnegative submodular function $f_u: \{0,1\}^{\mathcal{S}}\rightarrow \mathbb{R}^+\cup \{0\}$.
$u$ is satisfied if the set $S$ of documents she gets is such that $f_u(S)\geq 1$.
The objective is same as before.
\RSI\ generalizes \RDC\ (as well as \RMI\ and \RUI).
If the submodular function $f_u$ is the {\em coverage function}
\footnote{
The set of documents and the set of topics in $I_u$ form a set cover instance, where $I_u$ is the subset of topics which user $u$ is interested in.
For $S\subseteq \mathcal{S}$, the coverage function $f(S)$ is the number of topics in $I_u$ covered by some document in $S$.
}, \RSI\ is equivalent to \RDC.
\end{enumerate}

\subsection{Our results}

We find that the approximability of \RDC\ is closely related with the (algorithmic) integrality gap of the underlining set cover instance induced by the documents and topics. In particular, we can show the following result.
Let $F\subseteq \mathcal{E}$ be a subset of topics.  We denote $\SC(F)$ the set cover instance formed by the subsets $C_s:s\in \mathcal{S}$ and the set of topics in $F$.
Let $\LP(\SC(F))$ be the natural linear programming relaxation for $\SC(F)$:
\begin{eqnarray*}
\text{minimize}&\sum\limits_{s\in \mathcal{S}}x_s&\\
\text{subject to}&\sum\limits_{s:e\in C_s,s\in \mathcal{S}}x_s\ge1&\forall e\in F\\
&x_s\ge 0&\forall s\in \mathcal{S}
\end{eqnarray*}
\begin{theorem}
\label{thm:main}
Suppose for any $F\subseteq \mathcal{E}$, there is a polynomial time algorithm that can produce a solution for $\SC(F)$ whose cost is at most $\rho$ times the optimal value of
$\LP(\SC(F))$. There is a polynomial time factor $O(\rho)$ approximation algorithm for \RDC.
\end{theorem}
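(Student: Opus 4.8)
The plan is to round a time-indexed linear programming relaxation of \RDC\ by reducing, at a geometric sequence of time scales, to the set-cover instances $\SC(F)$ whose algorithmic integrality gap is assumed to be $\rho$. First I would write the relaxation. Introduce for each document $s$ and each integer time $t$ a variable $x_{s,t}\in[0,1]$ indicating that $s$ has been selected by time $t$; for each topic $e$ a variable $z_{e,t}$ for the extent to which $e$ is covered by time $t$; and for each user $u$ a variable $y_{u,t}$ for the extent to which $u$ is satisfied by time $t$. The constraints are monotonicity in $t$, the budget $\sum_{s}x_{s,t}\le t$, the covering link $z_{e,t}\le\min\{1,\sum_{s:e\in C_s}x_{s,t}\}$, and the satisfaction link $\sum_{e\in I_u}z_{e,t}\ge K_u\,y_{u,t}$. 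Since the satisfying time obeys $t_u=\sum_{t\ge0}(1-\mathbf{1}[u\text{ satisfied by }t])$, the objective $\sum_{u}\sum_{t\ge0}(1-y_{u,t})$ is a valid lower bound on $\OPT$, and I write $\LPOpt$ for its optimum. (I would also add valid inequalities forcing the coverage of each satisfied user to concentrate on topics that are themselves well covered; see below.)

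Second, I would discretize time into phases $\tau_j=2^{j}$. Solving the relaxation and restricting to $t=\tau_j$ yields a fractional document vector of total mass $\sum_s x_{s,\tau_j}\le\tau_j$ together with the set $U_j=\{u:\,y_{u,\tau_j}\ge 1/2\}$ of users the fractional solution has essentially satisfied by $\tau_j$. A standard counting shows $\sum_j \tau_j\cdot|U_{j}\setminus U_{j-1}|=O(\LPOpt)$, so it suffices to satisfy each $u$ within $O(\rho\,\tau_{j(u)})$ positions, where $j(u)$ is the first phase with $u\in U_j$.

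Third --- the heart of the argument --- for each phase $j$ I would turn the partial covering requirements of $U_j$ into a single set-cover target. For every $u\in U_j$ I select a set $T_u\subseteq I_u$ of $K_u$ topics that are \emph{well covered}, meaning $\sum_{s:e\in C_s}x_{s,\tau_j}\ge c$ for an absolute constant $c$, and set $F_j=\bigcup_{u\in U_j}T_u$. Once the topics in $T_u$ are integrally covered, user $u$ is satisfied, so it is enough to solve $\SC(F_j)$. Moreover $x/c$ restricted to time $\tau_j$ is feasible for $\LP(\SC(F_j))$ by the choice of $T_u$, whence $\LPOpt(\SC(F_j))\le\tau_j/c=O(\tau_j)$; applying the assumed $\rho$-approximation produces an integral collection $D_j$ of documents covering $F_j$ with $|D_j|\le\rho\cdot\LPOpt(\SC(F_j))=O(\rho\,\tau_j)$. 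I expect the main obstacle to be exactly the existence of the well-covered sets $T_u$: the satisfaction constraint only guarantees total fractional coverage $K_u/2$, which could be spread thinly over many topics none of which is individually well covered, and such topics cannot be placed in $F_j$ without inflating $\LPOpt(\SC(F_j))$. I would overcome this by strengthening the relaxation with knapsack-cover-type inequalities (one per user, asserting that after discarding any fewer than $K_u$ topics the residual requirement must still be met), so that any fractionally satisfied user is forced to have at least $K_u$ topics whose coverage is bounded below by a constant; this is where the whole reduction to set cover hinges.

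Finally, I would concatenate the integral solutions $D_1,D_2,\dots$ in order of increasing $j$ (discarding repeats). Every $u\in U_{j}$ is then satisfied within the first $\sum_{i\le j}|D_i|=\sum_{i\le j}O(\rho\,2^{i})=O(\rho\,\tau_j)$ positions, so its satisfying time in the produced ranking is $O(\rho\,\tau_{j(u)})$. Summing over all users and invoking the phase-counting bound from the second step gives a total satisfying time of $O(\rho)\cdot\LPOpt\le O(\rho)\cdot\OPT$, which is the claimed $O(\rho)$-approximation; this last accounting is the routine geometric summation standard for min-sum covering problems.
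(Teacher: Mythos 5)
The heart of your reduction---step 3---has a genuine gap, and it sits exactly at the place you flagged as the main obstacle. You need every fractionally half-satisfied user $u$ to own $K_u$ topics whose fractional coverage at time $\tau_j$ is at least an absolute constant $c$, and you propose to force this by adding knapsack-cover-type valid inequalities. No valid inequality can force it, because the property already fails at points in the convex hull of integral schedules, and every valid inequality is satisfied on that hull. Concretely: take one user with $I_u=\{e_1,\dots,e_m\}$, $K_u=1$, and $m$ documents where $s_i$ contains exactly topic $e_i$. Every ordering satisfies $u$ at time $1$; averaging the $m$ orderings that start with $s_1,\dots,s_m$ respectively gives a fractional point with $y_{u,1}=1$ in which every topic has coverage exactly $1/m$ at time $1$. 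This point is optimal and feasible for \emph{any} relaxation whose constraints are valid for integral solutions (knapsack-cover constraints included), yet for $m\gg 1/c$ no topic is well covered at any phase $\tau_j=o(m)$: your set $T_u$ does not exist, and you face exactly the dilemma you described---either $F_j$ omits $u$'s topics (and the set-cover call does nothing for $u$), or it includes thin topics and the feasibility of $x/c$ for $\LP(\SC(F_j))$ collapses, inflating $|D_j|$ far beyond $O(\rho\,\tau_j)$. So the deterministic ``collect well-covered topics and run set cover'' scheme cannot be completed.

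The paper resolves precisely this case with randomization rather than with a stronger LP. At scale $t=2^k$ it splits each user's requirement into (i) topics already covered to extent at least $1/50$ (the set $P_k$), which are handed to the $\rho$-approximate set-cover algorithm---this part matches your step 3 and is where $\rho$ enters the budget---and (ii) everything else, handled by independently sampling each remaining document $s$ with probability $\min\{1,50z^*_{st}\}$. The knapsack-cover constraints \eqref{eq:6}, written with $F=P_k\cap I_u$ and with documents split into those that single-handedly cover $K_u-|F|$ residual topics ($T_1$) and the rest ($T_2$), guarantee that either the $T_1$ mass is at least $1/5$ or the thin-topic mass is at least $\tfrac{3}{10}(K_u-|F|)$; a standard Chernoff bound in the first case and the Chernoff-type bound for submodular functions (Lemma~\ref{lem:chernoff2}) in the second show the sample satisfies $u$ with constant probability even when no individual topic is well covered---e.g., in the instance above the sample contains roughly $50$ documents and thus covers some topic of $I_u$ with overwhelming probability. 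Apart from this, your skeleton (time-indexed LP with knapsack-cover constraints, geometric phases, per-phase budget $O(\rho 2^k)$ via feasibility of the scaled $z^*$ for $\LP(\SC(\cdot))$, overflow control, and the final geometric summation against $\sum_u t^*_u\le 2\LPOpt$) is the same as the paper's; the missing ingredient is randomized rounding with concentration for thinly spread coverage, and it cannot be replaced by strengthening the relaxation.
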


First, we can see that Theorem~\ref{thm:main} produces $O(1)$ factor approximation for both \RMI\ and \RUI.
As we mentioned before, if we view \RMI\ and \RUI\ as special cases of \RDC,
the induced set cover instances have very simple structure:
each set (document) consists of a disjoint set of elements (topics).
In both $\RMI$ and $\RDC$, the integrality gap of $\LP(\SC(F))$ is $1$ for any $F\subseteq \mathcal{E}$
and we can find an integral optimal solution in polynomial time
(the algorithm trivially includes all subsets that contains at least one element in $F$).
Hence, $\rho=1$ and we have a constant factor approximation algorithm.
Therefore, our result generalizes the constant approximations for \RMI\ in \cite{bansal2010constant,skutella2011note,im2012preemptive}
and that for \RUI\ in \cite{azar2010rankingunrelated}.

For \RSI, Azar et al. showed that there is an $O(\log \frac{1}{\epsilon})$ approximation for the problem
where $\epsilon$ is the minimum non-zero marginal value for $f_i$s \cite{azar2011ranking}.
If the submodular functions are coverage function, the result translates to an $O(\log |\mathcal{E}|)$-approximation.
It is well known that we can round any fractional solution of $\LP(\SC(F))$ to an integral solution
such that the cost of the integer solution is at most $\log |\mathcal{E}|$ of the value of the fractional solution.
Hence, Theorem~\ref{thm:main} also gives an $O(\log |\mathcal{E}|)$-approximation, reproducing the result in \cite{azar2011ranking} for \RSI\ with coverage functions
(with a somewhat larger constant hidden in the big-O notation).

Our result can be seen as an interpolation between the constant approximation for \RMI\ and \RUI\ (which induce trivial set cover instances)
and the logarithmic approximation for \RSI (which may induce arbitrary set cover instances).
Besides the above implications on previous problems, Theorem~\ref{thm:main}
is also interesting since typically the set cover instances induced by the documents and topics are much easier to approximate than general set cover problem.
We provide some useful examples.
\begin{enumerate}
\item In \RMI\ and \RUI, the induced set cover instances are trivial and can be solved optimally.
\item Consider another interesting example where each topic is covered by at most $d$ documents.
It is known that we can obtain a $d$-approximation by a simple deterministic rounding or primal-dual techniques (see e.g. \cite{vazirani2001approximation}).
Hence, in this case, we have an $O(d)$-approximation for \RDC.
\item
Suppose the VC dimension of the set system $(\mathcal{E}, \mathcal{S})$ is $d$. 
It is well known that we can achieve an approximation factor of $O(d\log \tau)$ via the LP approach~\cite{even2005hitting},
where $\tau$ is the optimum LP value
($O(d\log \OPT)$ is known even earlier via non-LP approach~\cite{bronnimann1995almost}).
In many cases, $O(d\log \tau)$ can be much smaller than $O(\log |\mathcal{E}|)$.
\item
For some geometric set cover problems, we can achieve sub-logarithmic factor approximation algorithms
using LP approaches. For example, if each subset corresponds to a unit disk in the plane and each element 
corresponds to a point, there is a constant approximation~\cite{pandit2009approximation}.
For general disk graphs, a $2^{O(\log^*|\mathcal{E}|)}$-approximation is known~\cite{varadarajan2010weighted}.

Several sub-logarithmic factor approximation algorithms are known for certain geometric set cover problems
via other techniques, such as local search or dynamic programming~\cite{ambuhl2006constant,mustafa2009ptas,gibson2010algorithms,chan2012approximation}.
However, it is not clear how to combine those techniques with our LP approach. We leave this as an interesting
open question.
\end{enumerate}
Even though the real world topics and documents may not necessarily have low VC-dimension or match
any geometric set cover instance,
it is still our general belief that the real world instance do not form arbitrary set system and
the particularity of those instances may help us to develop sub-logarithm factor approximations,
which further implies that \RDC\ can be approximated within the same factor (up to a constant).
Exploring the particularity of the real world instances is left as an open question of great importance.

\subsection{Related work}

Azar et al. \cite{azar2009multiple} introduced \RMI\ and first gave an $O(\log n)$ factor approximation algorithm. 
Bansal et al. \cite{bansal2010constant} improved the approximation ratio to a constant (a few hundreds). 
Subsequently, the constant was further reduced to about 28 in~\cite{skutella2011note}, 
and then to 12.4~\cite{im2012preemptive}. 
An important special case of \RMI, where $K_u=1$ for each $u$, is called the {\em min-sum set cover} problem.
Feige et al.~\cite{feige2004approximating} developed a 4-approximation and proved that 
it is NP-hard to achieve an approximation factor of $4-\epsilon$ for any constant $\epsilon>0$.
In fact, it is conjectured that \RMI\ can also be approximated within a factor of $4$ \cite{im2012preemptive}.
Another special case of \RMI\ where $K_u=|I_u|$ has also been studied under the name of
{\em minimum latency set cover} and it is known that there is a polynomial time approximation algorithm with factor $2$
\cite{hassin2005approximation, karger1997scheduling}, which is also optimal assuming a variant of the Unique Games Conjecture
\cite{bansal2009optimal}.
Im et al.~\cite{im2012minimum} considered a generalization of \RSI\ where there is metric switching cost and gave
a poly-logarithmic factor approximation algorithm for it.

There is a huge literature on search result diversification in IR and DB literature.
We refer interested readers to 
\cite{carbonell1998use,chen2007addressing,agrawal2009diversifying, drosou2010search, gollapudi2009axiomatic,
borodin2012max} and the references therein.
In practice, the overall satisfying time as defined above
is not a direct measure of the overall user satisfaction.
Alternative measures have been proposed in the literature, such as discounted
cumulative gain (DCG) and mean average precision (MAP).
Bansal et al. considered \RMI\ with DCG being the objective function and obtained an $O(\log\log n)$-approximation~\cite{bansal2010approximation}.

\section{A Constant Factor Approximation Algorithm}

In this section, we will prove Theorem \ref{thm:main} by giving a randomized LP rounding algorithm.

\subsection{The LP Relaxation}

We use the following linear program relaxation.
Here we use boolean variable $x_{st}$ to represent whether document $s$ is selected at time $t$.
$y_{ut}$ indicates if user $u$ is satisfied after time $t$.
$z_{st}$ represents if document $s$ has been selected at time $t$.
\vspace{-0.2cm}
\begin{align}
\text{(\textbf{LP})}:\text{\ \ \ \ \ }\nonumber\\
\text{minimize \ }&\sum\limits_{u\in \mathcal{U}}\sum\limits_{t=1}^n(1-y_{ut})&&&\label{eq:0}\\
\text{subject to }&\sum\limits_{t=1}^nx_{st}=1&&\forall s\in \mathcal{S}&\label{eq:1}\\
&\sum\limits_{s\in \mathcal{S}}x_{st}=1&&\forall t\in [n]&\label{eq:2}\\
&z_{st}=\sum\limits_{t'=1}^tx_{st'}&&\forall t\in [n]&\label{eq:3}\\
&\sum\limits_{e\in I_u}(y_{ut}-\min\{\sum\limits_{s:e\in C_s}z_{st},1\})\le (|I_u|-K_u)y_{ut}&&\forall u\in \mathcal{U}, t\in [n]&\label{eq:4}\\
&x_{st},y_{ut},z_{st}\in [0,1]&&\forall s\in \mathcal{S}, u\in \mathcal{U}, t\in [n]&\label{eq:5}
\end{align}

Constraints \eqref{eq:1} and \eqref{eq:2} make sure that a document can be selected only once and each time we pick one document.
The meaning of $z_{st}$ is captured in constraints \eqref{eq:3}. Constraints \eqref{eq:4} guarantee that a user $u$ is satisfied if less than $|I_u|-K_u$ topics havn't been covered.
However, it is known that the integrality gap of this LP is unbounded (even for \RMI) \cite{bansal2010constant}.
To remedy this, \cite{bansal2010constant} uses the \textsl{knapsack cover} constraints to replace the simple covering constraints \eqref{eq:4}
In our case, we define $S(e,u,F)=\{s\mid e\in C_s,s\in T_2(u,F)\}$ where $T_1(u,F)$ is the set of all documents that cover at least $K_u-|F|$ topics in $I_u\backslash F$, and $T_2(u,F)=\mathcal{S}\backslash T_1(u,F)$.
And we use the following constraints instead of \eqref{eq:4}:
\vspace{-0.2cm}
\begin{align}
y_{ut}(K_u-|F|)\le (K_u-|F|)\sum\limits_{s\in T_1(u,F)}z_{st}+\sum\limits_{e\in I_u\backslash F}&\sum\limits_{s\in S(e,u,F)}z_{st}\nonumber\\
&\forall u\in \mathcal{U}, t\in [n], F\subseteq I_u,|F|\le K_u\label{eq:6}
\end{align}

Constraints \eqref{eq:6} differ from the knapsack cover constraints in \cite{bansal2010constant} in that we handle sets $T_1$ and $T_2$ seperately, for technical reason that will be clear from the analysis.


Now we show that (\textbf{LP}) is indeed an LP relaxation of \RDC. We just need to prove that any feasible solution to \RDC\ satisfies constraints \eqref{eq:6}: If $y_{ut}=0$, the inequality must be true because the left side is 0. If $y_{ut}=1$, there are two cases. The first case is that at least one document in $T_1(u,F)$ has been selected, which means $\sum\nolimits_{s\in T_1(u,F)}z_{st}\ge 1$. The other case is that at least $(K_u-|F|)$ topics have related document in $T_2(u,F)$, which means $\sum\nolimits_{e\in I_u\backslash F}\sum\nolimits_{s\in S(e,u,F)}z_{st}\ge(K_u-|F|)$. Therefore both cases satisfy the inequality.
So we have proved the following lemma:
\begin{lemma}
\label{lem:OPTeqRDC}
The optimal value $\LPOpt$ of (\textbf{LP}) is at most the optimal total satisfying time of \RDC.
\end{lemma}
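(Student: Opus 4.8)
The plan is to show that (\textbf{LP}) is a genuine relaxation by exhibiting, for any fixed ordering of the documents (in particular an optimal one for \RDC), an integral assignment of the LP variables that is feasible and whose objective value does not exceed the total satisfying time of that ordering. Since the LP minimizes over all feasible fractional points, producing one such integral point already bounds $\LPOpt$ from above.

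First I would set up the natural integral solution. Given an ordering, let $x_{st}=1$ exactly when document $s$ occupies position $t$ and $0$ otherwise; this is a permutation matrix, so constraints \eqref{eq:1} and \eqref{eq:2} hold with equality. I then set $z_{st}=\sum_{t'=1}^t x_{st'}$, which is forced by \eqref{eq:3} and equals the indicator that $s$ has appeared among the first $t$ documents. Finally I set $y_{ut}=1$ if and only if $t\ge t_u$, i.e.\ user $u$ is already satisfied after the first $t$ selections. With this choice each inner sum $\sum_{t=1}^n(1-y_{ut})$ counts the time steps before $u$ becomes satisfied, so it is at most $t_u$; summing over users bounds the objective \eqref{eq:0} by the total satisfying time.

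The crux is verifying the knapsack cover constraints \eqref{eq:6} for every $u$, every $t$, and every $F\subseteq I_u$ with $|F|\le K_u$. When $y_{ut}=0$ the left-hand side vanishes and the right-hand side is a sum of nonnegative $z$'s, so there is nothing to check. The interesting case is $y_{ut}=1$, where I must produce a right-hand side of at least $K_u-|F|$. Here I would split on whether any of the first $t$ documents lies in $T_1(u,F)$. If one does, then $\sum_{s\in T_1(u,F)}z_{st}\ge 1$ and the leading term alone already contributes $K_u-|F|$. Otherwise every selected document lies in $T_2(u,F)$; since $u$ is satisfied, at least $K_u$ topics of $I_u$ are covered by the first $t$ documents, and deleting $F$ removes at most $|F|$ of them, leaving at least $K_u-|F|$ covered topics in $I_u\setminus F$. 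Each such topic $e$ is covered by some selected document $s\in T_2(u,F)$, hence $s\in S(e,u,F)$ with $z_{st}=1$, so the double sum $\sum_{e\in I_u\setminus F}\sum_{s\in S(e,u,F)}z_{st}$ is at least $K_u-|F|$. Either way constraint \eqref{eq:6} holds.

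I expect the main obstacle to be precisely this second case of the knapsack cover verification: one has to be careful that restricting $S(e,u,F)$ to $T_2(u,F)$ does not lose the coverage guarantee. The separation of the documents into $T_1$ and $T_2$ is exactly what makes the case analysis clean, since the ``large'' documents are absorbed by the leading coefficient $(K_u-|F|)$ while the ``small'' documents are counted topic by topic; confirming that these two mechanisms jointly account for every satisfied configuration is where the argument must be done with care. Once \eqref{eq:6} is checked, feasibility is complete, and the objective bound from the second step yields $\LPOpt\le$ (optimal total satisfying time), proving the lemma.
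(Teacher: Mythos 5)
Your proposal is correct and follows essentially the same route as the paper: the paper's proof also constructs the natural integral solution from an optimal ordering (implicitly) and verifies the knapsack cover constraints \eqref{eq:6} by the same case analysis --- trivial when $y_{ut}=0$, and when $y_{ut}=1$ splitting on whether a selected document lies in $T_1(u,F)$ or whether at least $K_u-|F|$ topics in $I_u\setminus F$ are covered by selected documents in $T_2(u,F)$. Your write-up merely spells out the details (the explicit permutation-matrix assignment, the objective bound, and the counting argument in the $T_2$ case) that the paper leaves terse.
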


\subsection{A Randomized Rounding Algorithm}

Assume $(x^*,y^*,z^*)$ be the optimal fractional solution to (\textbf{LP}).
We also assume that for any $F\subseteq \mathcal{E}$, there is a poly-time algorithm \textbf{AlgoSC} which
can produce an integral solution for $\SC(F)$ whose cost is at most $\rho$
times the value of the {\em fractional} optimal solution to $\LP(\SC(F))$.
Our randomized rounding scheme consists of
$\lceil\log n\rceil+1$ rounds, where in the $k$-th round, we perform the following procedure.

\begin{itemize}
\item Let $t=2^k$, $G_k=\emptyset$ and $p_e=\min\{1,50\sum\nolimits_{s:e\in C_s}z_{st}^*\}$, $\forall e\in \mathcal{E}$.
\item Let $P_k=\{e\in \mathcal{E},p_e=1\}$. Let the set $H_k\subseteq \mathcal{S}$ be the solution of \textbf{AlgoSC}($\SC(P_k)$).
\item For each $s\in \mathcal{S}\backslash H_k$, add document $s$ to $G_k$ independently with probability $\min\{1,50z_{st}^*\}$.
\item If there are more than $(70+\rho)\cdot 2^k$ documents in $H_k\cup G_k$, we say this round is "overflowed" and select nothing, else we select all the documents in $H_k\cup G_k$ in arbitrary order in this round.
\end{itemize}
Our algorithm builds on the ideas developed in \cite{bansal2010constant} (as well as \cite{azar2010rankingunrelated}).
A key technical difference between our algorithm and \cite{bansal2010constant}
is that we need to deal with those topics that are almost covered (i.e., the set $P_k$)
and the rest separately.
It will be clear soon from the analysis, for a particular user $u$, independent rounding (step 3) can guarantee that,
at a cost not much more than the fractional optimal,
topics in $I_u\backslash P_k$ are covered with constant probability.
For these topics,
we can use a Chernoff-like concentration result for submodular functions to show this.
Topics in $I_u\cap P_k$ are handled separately by \textbf{AlgoSC} to make sure they are covered in
the $k$-th round. This is where the approximiblity of the set cover instance jumps in.

\subsection{The Analysis}\label{subsec:analysis}

Constraints \eqref{eq:3} and \eqref{eq:5} show that the optimal solution $z_{st}^*$ is monotonically non-decreasing with $t$ for all $s\in \mathcal{S}$. Thus it's easy to see that $y_{ut}^*$ is monotonically non-decreasing with $t$ for all $u\in \mathcal{U}$.

For each $u\in \mathcal{U}$, let $t_u^*=\max\{t\in [n]\mid y_{ut}^*\le \frac{1}{2}\}$, then $\sum\nolimits_{t=1}^n(1-y_{ut}^*)\le\sum\nolimits_{t=1}^{t_u^*}(1-y_{ut}^*)\le\frac{1}{2}t_u^*$. Thus we have the fact that
$\LPOpt \ge \frac{1}{2}\sum\nolimits_ut_u^*$.

Before we start to prove our Theorem \ref{thm:main}, we need the following Chernoff-type bounds:

\begin{lemma}
\label{lem:chernoff1}
If $X_1,X_2,\mathellipsis,X_n$ are independent $\{0,1\}$-valued random variables with $X=\sum\nolimits_i X_i$ such that $\mathbb{E}[X]=\mu$, then we have that
\begin{enumerate}
\item {\em\cite{motwani1995randomized}} $\Pr[X<(1-\delta)\mu]\le e^{-\frac{\delta^2}{2}\mu}$\label{eq:ch1}.
\item {\em\cite{boucheron2000sharp}} $\Pr[X>\mu+\beta]\le \exp(-\frac{\beta^2}{2\mu+\frac{2}{3}\beta})$\label{eq:ch3}.
\end{enumerate}
\end{lemma}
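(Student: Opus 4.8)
The plan is to prove both tail bounds by the standard Chernoff (exponential moment) method, handling the lower and upper tails separately but from a common starting point. Writing $p_i=\mathbb{E}[X_i]$ so that $\mu=\sum_i p_i$, I would first record the elementary estimate $\mathbb{E}[e^{\lambda X_i}]=1+p_i(e^{\lambda}-1)\le \exp(p_i(e^{\lambda}-1))$, valid for every real $\lambda$. By independence these multiply to give the single workhorse bound $\mathbb{E}[e^{\lambda X}]\le \exp(\mu(e^{\lambda}-1))$, which feeds both parts.

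For part 1, I would apply Markov's inequality to $e^{-\lambda X}$ with $\lambda>0$: for the threshold $(1-\delta)\mu$ this gives $\Pr[X<(1-\delta)\mu]\le e^{\lambda(1-\delta)\mu}\,\mathbb{E}[e^{-\lambda X}]\le \exp(\mu(e^{-\lambda}-1)+\lambda(1-\delta)\mu)$. Optimizing the exponent over $\lambda$ yields $e^{-\lambda}=1-\delta$ and hence the classical form $\Pr[X<(1-\delta)\mu]\le \bigl(e^{-\delta}/(1-\delta)^{1-\delta}\bigr)^{\mu}$. The remaining step is purely analytic: I would verify $-\delta-(1-\delta)\ln(1-\delta)\le -\delta^2/2$ for $0\le\delta<1$, which follows from the Taylor expansion $(1-\delta)\ln(1-\delta)=-\delta+\delta^2/2+\delta^3/6+\cdots$, so that the left-hand side equals $-\delta^2/2-\delta^3/6-\cdots$ and is dominated termwise by $-\delta^2/2$.

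For part 2, I would apply Markov's inequality to $e^{\lambda X}$ with $\lambda>0$, obtaining $\Pr[X>\mu+\beta]\le \exp(\mu(e^{\lambda}-1)-\lambda(\mu+\beta))$, and optimize to get $e^{\lambda}=1+\beta/\mu$. Setting $x=\beta/\mu$, this collapses to $\Pr[X>\mu+\beta]\le \exp(-\mu\,h(x))$ with $h(x)=(1+x)\ln(1+x)-x$. The crux is again a scalar inequality, namely $h(x)\ge \frac{x^2}{2+2x/3}$ for $x\ge 0$; multiplying by $\mu$ and substituting $x=\beta/\mu$ turns the exponent into exactly $\frac{\beta^2}{2\mu+\frac{2}{3}\beta}$, as required.

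I expect the only genuine obstacle to be those two scalar inequalities at the end, since the Chernoff machinery preceding them is entirely mechanical. Each can be dispatched by comparing Taylor series (as sketched above for the lower tail) or by a short derivative/convexity argument for $h$; and because these are precisely the bounds recorded in \cite{motwani1995randomized} and \cite{boucheron2000sharp}, one may alternatively just invoke those references rather than reproduce the elementary estimates.
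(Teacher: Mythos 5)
Your proposal is correct, and it is worth noting that the paper itself offers no proof of this lemma at all: both bounds are quoted as known results, cited to \cite{motwani1995randomized} and \cite{boucheron2000sharp} respectively --- which is precisely the fallback you mention in your closing sentence. Your derivation is the canonical exponential-moment argument that those references use, so there is no real divergence in approach; you have simply made it self-contained. The two scalar inequalities you correctly identify as the only genuine content do hold. For the lower tail, $(1-\delta)\ln(1-\delta)=-\delta+\sum_{n\ge 2}\frac{\delta^n}{n(n-1)}$, so $-\delta-(1-\delta)\ln(1-\delta)=-\frac{\delta^2}{2}-\sum_{n\ge 3}\frac{\delta^n}{n(n-1)}\le-\frac{\delta^2}{2}$ on $[0,1)$, exactly as you sketch (and for $\delta\ge 1$ the probability is $0$, so the bound is trivial). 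For the upper tail, setting $h(x)=(1+x)\ln(1+x)-x$ and $\phi(x)=h(x)-\frac{3x^2}{2(3+x)}$, one checks $\phi(0)=\phi'(0)=0$ and $\phi''(x)=\frac{1}{1+x}-\frac{27}{(3+x)^3}\ge 0$ for $x\ge 0$, since this is equivalent to $(3+x)^3\ge 27(1+x)$, i.e.\ $9x^2+x^3\ge 0$; hence $\phi\ge 0$, and substituting $x=\beta/\mu$ turns $\mu h(\beta/\mu)$ into a quantity at least $\frac{\beta^2}{2\mu+\frac{2}{3}\beta}$, giving part 2. So every step of your outline goes through.
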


\begin{lemma}{\em\cite{chekuri2009dependent}}
\label{lem:chernoff2}
Let $f:\{0,1\}^n\rightarrow \mathbb{R}^+$ be a monotone submodular function with marginal values in $[0,1]$. Let $\mu=\mathbb{E}[f(X_1,\mathellipsis,X_n)]$. Then for any $\delta>0$,
\[\Pr\Bigl[f(X_1,\mathellipsis,X_n)\le(1-\delta)\mu\Bigr]\le e^{-\frac{\delta^2}{2}\mu}\label{eq:ch2}.\]
\end{lemma}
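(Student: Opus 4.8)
The plan is to prove this as an instance of the \emph{self-bounding function} concentration phenomenon, since the only structural facts we need from $f$ are a Lipschitz-type property and a ``sum of marginals'' bound, both of which follow from monotonicity and submodularity. Concretely, I would view $f$ as a function of the independent indicators $X_1,\dots,X_n$, where $X_i=1$ means element $i$ is present, and show that this function is self-bounding; the stated lower-tail bound then follows from the standard concentration inequality for self-bounding functions. The point of routing through self-boundedness, rather than a crude coordinate-Lipschitz argument, is that a naive McDiarmid/Azuma estimate only controls the tail with an $O(n)$ variance proxy, whereas here we must obtain the much sharper $O(\mu)$ proxy that yields the $e^{-\delta^2\mu/2}$ rate.

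First I would verify the two self-bounding conditions. Fix a realization of $(X_1,\dots,X_n)$ and let $S\subseteq[n]$ be the indicated set, $Z=f(S)$, and for each $i$ let $Z_i=f(S\setminus\{i\})$ be the value obtained by forcing $X_i=0$ (so $Z_i$ does not depend on $X_i$). Monotonicity gives $Z-Z_i\ge 0$, the bounded-marginal hypothesis gives $Z-Z_i\le 1$, and when $X_i=0$ the drop is exactly $0$; so each coordinate changes $f$ by at most $1$. The crucial condition is $\sum_i (Z-Z_i)\le Z$. Only coordinates with $i\in S$ contribute, and I would order $S=\{s_1,\dots,s_k\}$ and telescope: $f(S)-f(\emptyset)=\sum_{j}\bigl(f(\{s_1,\dots,s_j\})-f(\{s_1,\dots,s_{j-1}\})\bigr)$. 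By submodularity the $j$-th summand, which is the marginal of $s_j$ with respect to the prefix $\{s_1,\dots,s_{j-1}\}$, dominates the marginal of $s_j$ with respect to the larger set $S\setminus\{s_j\}$, namely $f(S)-f(S\setminus\{s_j\})$. Summing gives $\sum_{j}\bigl(f(S)-f(S\setminus\{s_j\})\bigr)\le f(S)-f(\emptyset)\le f(S)=Z$, which is exactly the required inequality.

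With self-boundedness established, I would invoke the lower-tail self-bounding inequality in the Boucheron--Lugosi--Massart form: for such $Z$ with $\mu=\mathbb{E}[Z]$ one has $\Pr[Z\le \mu-t]\le\exp\bigl(-t^2/(2\mu)\bigr)$. For a self-contained derivation I would use the exponential-moment/entropy method: bound $\psi(\lambda)=\log\mathbb{E}[e^{-\lambda(Z-\mu)}]$ by using the two self-bounding inequalities to control, for each coordinate, the conditional expectation of $e^{-\lambda(Z-Z_i)}-1+\lambda(Z-Z_i)$ over the remaining randomness, obtaining a differential inequality for $\psi$ whose integration yields the sub-Poisson estimate $\psi(\lambda)\le \mu(e^{-\lambda}-1+\lambda)$; the Chernoff bound $\Pr[Z\le\mu-t]\le\exp(\psi(\lambda)-\lambda t)$ optimized over $\lambda>0$ then gives the quadratic tail. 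Finally, setting $t=\delta\mu$ turns $\exp\bigl(-t^2/(2\mu)\bigr)$ into $\exp(-\delta^2\mu/2)$, which is precisely the claimed inequality $\Pr[f(X_1,\dots,X_n)\le(1-\delta)\mu]\le e^{-\delta^2\mu/2}$. The main obstacle is the verification of $\sum_i(Z-Z_i)\le Z$: this is where submodularity enters through the telescoping-plus-diminishing-returns argument, and it is exactly the ingredient that upgrades the crude variance proxy to the correct $O(\mu)$ one. Once self-boundedness is in hand, the concentration step is entirely standard.
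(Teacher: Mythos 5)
Your proposal is correct. Note that the paper does not actually prove Lemma~\ref{lem:chernoff2} at all --- it imports the statement from \cite{chekuri2009dependent} as a black box --- and the route you take (verify that a monotone submodular function with marginals in $[0,1]$ is self-bounding, with the key inequality $\sum_i (Z-Z_i)\le Z$ obtained by telescoping plus diminishing returns, then invoke the Boucheron--Lugosi--Massart lower-tail bound for self-bounding functions and set $t=\delta\mu$) is essentially the same argument by which the cited work establishes this bound. Both the self-bounding verification and the concentration step in your write-up are sound, so there is nothing to correct.
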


We now give the following lemma:

\begin{lemma}
\label{lem:notsatisfy}
For any user $u\in \mathcal{U}$ and a non-overflowed round $k$ such that $2^k\ge t_u^*$. The probability that $H_k\cup G_k$ does not satisfy $u$ is at most 0.023.
\end{lemma}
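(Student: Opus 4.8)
The plan is to fix the user $u$ and a non-overflowed round $k$ with $2^k\ge t_u^*$, write $t=2^k$, and reduce the claim to a covering statement about the random set $G_k$. Since $H_k$ is a feasible set cover for $P_k$, every topic in $F:=I_u\cap P_k$ is covered deterministically in this round; set $r:=K_u-|F|$. If $r\le 0$ the user is already satisfied, so assume $r\ge 1$, and it suffices to show that $H_k\cup G_k$ covers at least $r$ of the topics in $I_u\setminus F=I_u\setminus P_k$ with probability at least $1-0.023$. First I would use that $z^*$ (hence $y^*$) is non-decreasing in $t$ together with the definition of $t_u^*$ to get $y^*_{ut}\ge \tfrac12$, and then instantiate the knapsack-cover constraint \eqref{eq:6} with this particular $F$ (note $F\subseteq I_u$ and $|F|=K_u-r<K_u$, so it is admissible). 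Using $K_u-|F|=r$, this yields
\[
\tfrac12\,r\ \le\ r\sum_{s\in T_1(u,F)}z^*_{st}\ +\ \sum_{e\in I_u\setminus F}\ \sum_{s\in S(e,u,F)}z^*_{st},
\]
so one of the two terms on the right is at least a constant fraction of $r$. This naturally produces two cases, and since which case holds is determined by the (fixed) LP solution, it is enough to bound the failure probability by $0.023$ in each case separately rather than to add the two bounds.

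In the first case, $\sum_{s\in T_1(u,F)}z^*_{st}$ is at least a fixed constant. Here a single selected document of $T_1(u,F)$ already covers $\ge r$ topics of $I_u\setminus F$, which together with the deterministically covered $F$ gives $\ge K_u$ covered topics and satisfies $u$. If some document of $T_1(u,F)$ lies in $H_k$ we are done with certainty; otherwise every such document is eligible for $G_k$ and is added independently with probability $\min\{1,50z^*_{st}\}$, so the probability that none is added is at most $\prod_{s\in T_1(u,F)}(1-50z^*_{st})\le\exp(-50\sum_{s\in T_1(u,F)}z^*_{st})$, which is below $0.023$ once the case threshold is fixed appropriately. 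I would also observe that the boundary case $r=1$ always falls here, since a document of $T_2(u,F)$ covers fewer than $r=1$, i.e.\ zero, residual topics and the second term vanishes.

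The main work, and the main obstacle, is the second case, where $\sum_{e\in I_u\setminus F}\sum_{s\in S(e,u,F)}z^*_{st}\ge c\,r$ for a constant $c>0$. The plan is to apply the submodular Chernoff bound (Lemma \ref{lem:chernoff2}) to the coverage function $f$ counting the topics of $I_u\setminus F$ covered by the documents of $T_2(u,F)$ that land in $G_k$. To lower bound $\mathbb{E}[f]$ I would use that $e\in I_u\setminus F$ implies $e\notin P_k$, hence $\sum_{s:e\in C_s}z^*_{st}<\tfrac{1}{50}$ and each relevant document is added with probability exactly $50z^*_{st}$; the elementary inequality $1-e^{-x}\ge(1-e^{-1})x$ on $[0,1]$ then gives $\Pr[e\text{ covered}]\ge\Omega\!\big(\sum_{s\in S(e,u,F)}z^*_{st}\big)$ and, after summing over $e$, $\mathbb{E}[f]=\Omega(r)$. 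The key structural point, and exactly the reason the algorithm separates $T_1$ from $T_2$, is that every document of $T_2(u,F)$ covers fewer than $r$ residual topics, so $f/(r-1)$ is a monotone submodular function with marginal values in $[0,1]$ and Lemma \ref{lem:chernoff2} becomes applicable. Choosing $\delta$ so that $(1-\delta)\mathbb{E}[f]=r-1$ and using that $\mathbb{E}[f]/(r-1)$ exceeds a fixed constant larger than $1$ then bounds $\Pr[f\le r-1]$ below $0.023$. The delicate part is purely quantitative: the multiplier $50$, the case-split threshold, and the resulting constant in $\mathbb{E}[f]/(r-1)$ must be tuned jointly so that both cases simultaneously fall under $0.023$, while keeping track of the harmless fact that documents already placed in $H_k$ can only increase the coverage.
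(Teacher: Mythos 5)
Your proposal is correct and follows the same route as the paper's proof: instantiate the knapsack-cover constraint \eqref{eq:6} at $F=I_u\cap P_k$, $t=2^k$ with $y^*_{ut}\ge\tfrac12$, split into the two cases according to which term of the constraint is large, handle the ``heavy'' documents $T_1(u,F)$ by showing at least one is selected, and handle $T_2(u,F)$ by applying the submodular Chernoff bound (Lemma~\ref{lem:chernoff2}) to the coverage function of $I_u\setminus F$, using $e\notin P_k\Rightarrow\sum_{s:e\in C_s}z^*_{st}<\tfrac1{50}$ and $1-e^{-x}\ge(1-1/e)x$ to get $\mathbb{E}[f]\ge 15(1-1/e)(K_u-|F|)$.

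Two local deviations are worth noting, and both are to your credit. First, in the $T_1$ case you bound the probability that no document of $T_1(u,F)$ is chosen directly by $\prod_s(1-50z^*_{st})\le e^{-50\sum_s z^*_{st}}\le e^{-10}$, whereas the paper invokes the multiplicative Chernoff lower tail to get $0.018$; your computation is more elementary and tighter. Second, and more substantively, in the $T_2$ case the paper applies Lemma~\ref{lem:chernoff2} to the unscaled coverage function $f$, whose marginal values are \emph{not} in $[0,1]$: a document of $T_2(u,F)$ can cover up to $K_u-|F|-1$ residual topics, so the lemma's hypothesis is violated whenever $K_u-|F|>2$. Your rescaling $f/(r-1)$ (valid precisely because $T_2$ documents cover at most $r-1=K_u-|F|-1$ residual topics), together with the observation that $r=1$ forces the first case since then the $T_2$ term of \eqref{eq:6} vanishes, is exactly the missing step that makes the application legitimate; and since $\mathbb{E}[f]/(r-1)\ge 15(1-1/e)\,r/(r-1)\ge 15(1-1/e)$ while the exponent $\delta^2\mu/2$ with $(1-\delta)\mu=1$ is increasing in $\mu$, the worst case of your scaled bound reproduces the paper's constant $e^{-(\frac{14e-15}{15e-15})^2\frac{15e-15}{2e}}<0.023$. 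The only thing left implicit in your write-up is the final numerical verification with the thresholds $\tfrac15$ and $\tfrac3{10}$, but the tuning you describe does work out to exactly these numbers, so this is a matter of bookkeeping rather than a gap.
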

\begin{proof}
Fix a user $u$. Consider constraints \eqref{eq:6} for $F=P_k\cap I_u$ and $t=2^k$. If $|F|\ge K_u$, $u$ is clearly satisfied in this phase because all the documents in $H_k$ are selected in this round. Therefore, we consider the case where $|F|<K_u$. From constraints \eqref{eq:6} we know that:
\[(K_u-|F|)\sum\limits_{s\in T_1(u,F)}z_{st}^*+\sum\limits_{e\in I_u\backslash F}\sum\limits_{s\in S(e,u,F)}z_{st}^*\ge y_{ut}^*(K_u-|F|)\ge \frac{1}{2}(K_u-|F|).\]

Here, either $\sum\nolimits_{s\in T_1(u,F)}z_{st}^*$ must be greater or equal to $\frac{1}{5}$, or $\sum\nolimits_{e\in I_u\backslash F}\sum\nolimits_{s\in S(e,u,F)}z_{st}^*$ must be greater or equal to $\frac{3}{10}(K_u-|F|)$.\\
\\
In the case of $\sum\nolimits_{s\in T_1(u,F)}z_{st}^*\ge \frac{1}{5}$. If there is an $s\in T_1(u,F)$ such that $50z_{st}\ge 1$, then $s$ is selected and user $u$ is satisfied.
Otherwise, since we select documents independently in our algorithm and the expected number of selected documents in $T_1(u,F)$ is
\[\mathbb{E}\Bigl[|(G_k\cup H_k)\cap T_1(u,F)|\Bigr]=\sum\limits_{s\in T_1(u,F)\backslash H_k}50z_{st}^*+|T_1(u,F)\cap H_k|\ge \sum\limits_{s\in T_1(u,F)}50z_{st}^*\ge 50\times \frac{1}{5}=10.\]

From Lemma \ref{lem:chernoff1} \eqref{eq:ch1}, we know that the probability that $H_k\cup G_k$ contains less than one document in $T_1(u,F)$ is
\[\Pr\Bigl[|(G_k\cup H_k)\cap T_1(u,F)|<(1-\frac{9}{10})\cdot 10\Bigr]\le \exp(-\frac{(\frac{9}{10})^2}{2}\cdot 10)<0.018.\]

Therefore the probability that user $u$ is not satisfied by $H_k\cup G_k$ in this case is at most 0.018.\\
\\
In the case of $\sum\nolimits_{e\in I_u\backslash F}\sum\nolimits_{s\in S(e,u,F)}z_{st}^*\ge \frac{3}{10}(K_u-|F|)$, assume boolean vector $\textbf{w}=\{w_s\}\in \{0,1\}^{|T_2(u,F)|}$ indicates the selected documents in $T_2(u,F)$. Let submodular function $f(\textbf{w})=\sum\nolimits_{e\in I_u\backslash F}\min\{1,\sum\nolimits_{s\in S(e,u,F)}w_{s}\}$, i.e. the number of topics in $I_u\backslash F$ that the selection of documents $\textbf{w}$ covers. Suppose $\overline{\textbf{z}_t}=\{\overline{z_{st}^*}\}_{s\in T_2(u,F)}$ to be a random $0/1$ vector that is obtained as follows: Independently set $\overline{z_{st}^*}$ to be 0 with probability $(1-50z_{st}^*)$ if $s\in T_2(u,F)\backslash H_k$, and 1 otherwise. Since for any $e\in I_u\backslash F$, $\sum\nolimits_{s:e\in C_s}z_{st}^*<\frac{1}{50}$ (see the definition of $F$ and $P_k$), we can find:
\[\begin{array}{lll}
\mathbb{E}\Bigl[f(\overline{\textbf{z}_t})\Bigr]&=&\sum\limits_{e\in I_u\backslash F}\Pr\Bigl[e\text{ is covered by }\overline{\textbf{z}_t}\Bigr]\\
&=&\sum\limits_{e\in (I_u\backslash F)\backslash\bigcup\nolimits_{s\in H_k}C_s}
(1-\prod\limits_{s\in S(e,u,F)}(1-50z_{st}^*))+|\{e\mid e\in (I_u\backslash F)\cap\bigcup\limits_{s\in H_k}C_s\}|\\
&\ge&\sum\limits_{e\in I_u\backslash F}(1-\prod\limits_{s\in S(e,u,F)}(1-50z_{st}^*))\\
&\ge&\sum\limits_{e\in I_u\backslash F}(1-\prod\limits_{s\in S(e,u,F)}e^{-50z_{st}^*})\\
&\ge&\sum\limits_{e\in I_u\backslash F}(1-\exp(-\sum\limits_{s\in S(e,u,F)}50z_{st}^*))\\
&\ge&\sum\limits_{e\in I_u\backslash F}(1-\frac{1}{e})\sum\limits_{s\in S(e,u,F)}50z_{st}^*\\
&\ge&(1-\frac{1}{e})15(K_u-|F|)
\end{array}\]
where the penultimate inequality is because $(1-e^{-x})\ge(1-\frac{1}{e})x$, $\forall x\in [0,1]$.

From Lemma \ref{lem:chernoff2}, we know that
\[\Pr[f(\overline{\textbf{z}_t})\le |K_u|-|F|]=\Pr[f(\overline{\textbf{z}_t})\le (1-\frac{14e-15}{15e-15})]\le e^{-(\frac{14e-15}{15e-15})^2\frac{15e-15}{2e}}<0.023.\]

This shows that the probability that user $u$ is not satisfied by $H_k\cup G_k$ in this case is at most 0.023, which complete the proof of Lemma \ref{lem:notsatisfy}.
\end{proof}
\begin{lemma}
\label{lem:overflow}
The probability that the algorithm "overflowed" in round $k$ is at most 0.03.
\end{lemma}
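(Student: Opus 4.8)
The plan is to bound the expected size of $H_k \cup G_k$ and then apply a concentration inequality to control the overflow probability. The set $H_k$ is produced by \textbf{AlgoSC} on the instance $\SC(P_k)$, so its size is at most $\rho$ times the optimal fractional value of $\LP(\SC(P_k))$. First I would observe that the vector $\{\min\{1, 50 z_{st}^*\}\}_{s \in \mathcal{S}}$ is a feasible fractional solution to $\LP(\SC(P_k))$: by definition $P_k = \{e : p_e = 1\}$, so for each $e \in P_k$ we have $50 \sum_{s : e \in C_s} z_{st}^* \geq 1$, which is exactly the covering constraint once we cap each coordinate at $1$. Hence the fractional optimum of $\LP(\SC(P_k))$ is at most $\sum_{s} \min\{1, 50 z_{st}^*\} \leq 50 \sum_s z_{st}^*$, and therefore $|H_k| \le \rho \cdot 50 \sum_s z_{st}^*$. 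The remaining task is to bound $\sum_s z_{st}^*$ using constraint \eqref{eq:2}: since $z_{st}^* = \sum_{t'=1}^t x_{st'}^*$ and $\sum_s x_{st'}^* = 1$ for each $t'$, summing over $s$ gives $\sum_s z_{st}^* = \sum_{t'=1}^t \sum_s x_{st'}^* = t = 2^k$. So $|H_k| \leq 50\rho \cdot 2^k$.

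Next I would handle $G_k$. Each document $s \in \mathcal{S} \setminus H_k$ is added to $G_k$ independently with probability $\min\{1, 50 z_{st}^*\}$, so $\mathbb{E}[|G_k|] = \sum_{s \notin H_k} \min\{1, 50 z_{st}^*\} \le 50 \sum_s z_{st}^* = 50 \cdot 2^k$. Thus $\mathbb{E}[|H_k \cup G_k|] = |H_k| + \mathbb{E}[|G_k|] \le (50\rho + 50)\cdot 2^k$. The overflow event is $|H_k \cup G_k| > (70 + \rho)\cdot 2^k$. Since $|H_k|$ is a fixed number once $P_k$ is determined (it is deterministic given the LP solution), the only randomness is in $|G_k|$, which is a sum of independent $\{0,1\}$ random variables. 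The plan is to apply the upper-tail bound of Lemma~\ref{lem:chernoff1}\eqref{eq:ch3} to $|G_k|$ with $\mu = \mathbb{E}[|G_k|] \le 50 \cdot 2^k$ and an appropriate deviation $\beta$. Overflow requires $|G_k| > (70 + \rho)2^k - |H_k| \ge (70 + \rho)2^k - 50\rho \cdot 2^k = (70 - 49\rho)2^k$, which is the wrong direction when $\rho$ is large; I expect one must instead bound $|G_k|$ against its own mean, so the deviation $\beta$ should be calibrated as $\beta = (70+\rho)2^k - |H_k| - \mu$, and I would verify $\beta$ is a large enough multiple of $\sqrt{\mu}$ uniformly in $\rho$.

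The main obstacle I anticipate is making the concentration bound uniform in $\rho$, because the slack $(70 + \rho)2^k - |H_k| - \mathbb{E}[|G_k|]$ depends on $\rho$ through both $|H_k|$ (which can be as large as $50\rho \cdot 2^k$) and the threshold $(70+\rho)2^k$. The threshold grows only linearly with coefficient $1$ in $\rho$, while $|H_k|$ can grow with coefficient $50$; this suggests the crude bound $|H_k| \le 50\rho \cdot 2^k$ is too lossy and that the correct accounting must use a tighter estimate, perhaps replacing the factor $50$ in the $H_k$ bound by using that only elements of $P_k$ (those almost fully covered) enter the set-cover instance and that their total fractional weight is much smaller than $2^k$. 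I would resolve this by re-examining how the constant $70$ in the overflow threshold and the constant $50$ in the rounding probabilities were chosen in the algorithm, and by arguing that the fractional cover weight restricted to $P_k$ contributes at most $\sum_s z_{st}^* = 2^k$ to the bound after capping, so that $|H_k| \le \rho \cdot 2^k$ rather than $50 \rho \cdot 2^k$; with that correction the slack is a fixed constant multiple of $2^k \ge \mu/50$, and Lemma~\ref{lem:chernoff1}\eqref{eq:ch3} yields a probability bounded by a fixed constant below $0.03$ independent of $k$ and $\rho$.
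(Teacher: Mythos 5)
Your overall strategy is the same as the paper's: bound $|H_k|$ deterministically via the \textbf{AlgoSC} guarantee, bound $\mathbb{E}[|G_k|]\le 50\cdot 2^k$ using constraints \eqref{eq:2} and \eqref{eq:3}, and apply the upper-tail bound of Lemma~\ref{lem:chernoff1}~\eqref{eq:ch3} to $|G_k|$, which is a sum of independent indicators. Your treatment of $G_k$ is exactly the paper's argument.

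Where you diverge is the bound on $|H_k|$, and here your caution exposes a real problem. The paper asserts that $\{z^*_{st}\}_{s\in\mathcal{S}}$ is itself feasible for $\LP(\SC(P_k))$ and concludes $|H_k|\le\rho\sum_s z^*_{st}=\rho\cdot 2^k$. Under the stated definition of $P_k$ this assertion is false: membership $e\in P_k$ only means $50\sum_{s:e\in C_s}z^*_{st}\ge 1$, i.e.\ $\sum_{s:e\in C_s}z^*_{st}\ge \frac{1}{50}$ (and the proof of Lemma~\ref{lem:notsatisfy} relies on precisely this reading of $P_k$). What actually follows is your bound: the capped vector $\{\min\{1,50z^*_{st}\}\}$ is feasible, giving $|H_k|\le 50\rho\cdot 2^k$, which, as you correctly note, is incompatible with the overflow threshold $(70+\rho)\cdot 2^k$ for large $\rho$. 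Unfortunately, the repair you conjecture in your last paragraph --- that a tighter accounting yields $|H_k|\le\rho\cdot 2^k$ --- cannot work, because that bound is genuinely false. Take round $k=0$ ($t=1$), fifty topics $e_1,\dots,e_{50}$, fifty documents with $C_{s_i}=\{e_i\}$, and an LP solution with $z^*_{s_i,1}=\frac{1}{50}$ for every $i$ (such a uniform solution is optimal, e.g., when a single user has $I_u=\mathcal{E}$ and $K_u=50$). Then every topic lies in $P_0$, every cover of $\SC(P_0)$ --- fractional or integral --- has size $50$, and the induced set cover instance is trivial ($\rho=1$), so $|H_0|=50$ while $\rho\cdot 2^k=1$. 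So your final step is a genuine gap; it is, however, the same gap that sits in the paper's own proof, and your version of the lemma's skeleton is otherwise more careful than the published one.

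The fix is not a sharper bound on $|H_k|$ but a change of constant in the algorithm: raise the overflow threshold from $(70+\rho)\cdot 2^k$ to $(70+50\rho)\cdot 2^k$. Then $|H_k|\le 50\rho\cdot 2^k$ holds deterministically by your feasibility argument, overflow forces $|G_k|>70\cdot 2^k$, and the computation you outline with $\mu\le 50\cdot 2^k$ and $\beta=20\cdot 2^k$ gives
\[
\Pr\Bigl[|G_k|>70\cdot 2^k\Bigr]\le \exp\Bigl(-\frac{400\cdot 2^{2k}}{100\cdot 2^k+\frac{40}{3}\cdot 2^k}\Bigr)<0.03 ,
\]
so the lemma holds with the modified threshold. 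Carrying $(70+50\rho)$ instead of $(70+\rho)$ through the proof of Theorem~\ref{thm:main} changes only the hidden constants, so the final $O(\rho)$ approximation guarantee is unaffected. (Note that the alternative fix of redefining $p_e$ without the factor $50$, so that $\{z^*_{st}\}$ becomes feasible for $\LP(\SC(P_k))$, does not work: it would destroy the expected-coverage lower bound in the proof of Lemma~\ref{lem:notsatisfy}, which needs every topic outside $P_k$ to have fractional coverage below $\frac{1}{50}$.)
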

\begin{proof}
First we show $|H_k|\le \rho\times 2^k$. It is easy to see that $\textbf{z}=\{z_{st}\}_{s\in \mathcal{S}}$ is a feasible solution of $\LP(\SC(P_k)))$. By our assumption on \textbf{AlgoSC}, we have that
\[|H_k|\le \rho\sum\limits_{s\in \mathcal{S}}z_{st}^*=\rho\sum\limits_{s\in \mathcal{S}}\sum\limits_{t'=1}^tx_{st'}^*=\rho\sum\limits_{t'=1}^t\sum\limits_{s\in \mathcal{S}}x_{st'}^*=\rho\cdot2^k.\]
where the last equation is because $t=2^k$ in round $k$.

Therefore, it is suffice to show that $\Pr[|G_k|\ge 70\cdot 2^k]<0.03$.
In our setting,
\[\mathbb{E}\Bigl[|G_k|\Bigr]=\sum\limits_{s\in \mathcal{S}\backslash H_k}\min\{1,50z_{st}^*\}\le \sum\limits_{s\in \mathcal{S}}50z_{st}^*=\sum\limits_{s\in \mathcal{S}}\sum\limits_{t'=1}^t50x_{st'}^*=\sum\limits_{t'=1}^t\sum\limits_{s\in \mathcal{S}}50x_{st'}^*=50\cdot 2^k.\]

From Lemma \ref{lem:chernoff1} \eqref{eq:ch3}, we know that
\[\Pr\Bigl[|G_k|>50\cdot 2^k+20\cdot 2^k\Bigr]\le \exp(-\frac{400\cdot 2^{2k}}{100\cdot 2^k+\frac{40}{3}\cdot 2^k})<0.03.\]
\end{proof}
Now we will prove our Theorem \ref{thm:main}.
\begin{proof}
Let \textbf{Satisfy}($u$) denote the satisfying time of $u$ in our algorithm. From constraints \eqref{eq:1}, \eqref{eq:2} and \eqref{eq:3}, we know that $z_{sn}^*=1$ for all $s\in \mathcal{S}$, thus all the users must be satisfied after $\lceil\log n\rceil+1$ rounds. If some user $u$ is satisfied before the $\lceil\log t_u^*\rceil$th round, the satisfying time $\textbf{Satisfy}($u$)\le 2^{\lceil\log t_u^*\rceil}$.

Otherwise, if some user $u$ is satisfied after the $\lceil\log t_u^*\rceil$th round, since we select at most $(70+\rho)\cdot 2^k$ documents in each round, the satisfying time of user $u$ is at most $2\cdot (70+\rho)\cdot 2^k$ if he is satisfied in the $k$-th round. From Lemma \ref{lem:notsatisfy} and Lemma \ref{lem:overflow}, we know that the probability that user $u$ isn't satisfied after the $k$-th round where $2^k\ge t_u^*$ is less than $1-(1-0.023)\times(1-0.03)<0.053$.
Notice that the probability is independent in each round, we get the expected total satisfying time:
\[\begin{array}{lll}
\mathbb{E}\Bigl[\sum\limits_{u\in \mathcal{U}}\textbf{Satisfy}($u$)\Bigr]&\le&\sum\limits_{u\in \mathcal{U}}(2\cdot(70+\rho)\cdot 2^{\lceil\log t_u^*\rceil}+\sum\limits_{i=\lceil\log t_u^*\rceil+1}^{\lceil\log n\rceil+1}(70+\rho)\cdot 2^i\cdot 0.053^{i-\lceil\log t_u^*\rceil})\\
&\le&\sum\limits_{u\in \mathcal{U}}((140+2\rho)t_u^*+(70+\rho)t_u^*\sum\limits_{i=1}^{\infty}0.106^i)\\
&<&(149+2.12\rho)\sum\limits_{u\in \mathcal{U}}t_u^*\\
&<&(298+4.3\rho)\LPOpt =O(\rho \OPT).
\end{array}\]

This complete the proof of Theorem \ref{thm:main} from Lemma \ref{lem:OPTeqRDC}.
\end{proof}

\subsection{Solving the LP}

In order to use ellipsoid method to find the optimal solution, we need to find a polynomial-time separation oracle that verifies if a candidate solution satisfies all constraints \cite{grotschelgeometric}. Unfortunately, constraints \eqref{eq:6} contains exponentially many inequalities and it is hard to find such a separation oracle. However, we can use the trick mentioned in \cite{azar2010rankingunrelated}.
Note that in our analysis, we only consider one knapsack inequality in each round where $F=P_k\cap I_u$ and $t=2^k$. Thus if there is a solution satisfies all these $\lceil\log n\rceil+1$ inequalities in \eqref{eq:6} as well as other constraints in \eqref{eq:1}, \eqref{eq:2}, \eqref{eq:3} and \eqref{eq:5}, it is enough for our algorithm even if it is not a feasible solution for (\textbf{LP}). Therefore in each iteration of the ellipsoid method, we just need to check the validity of polynomial constraints, which forms a polynomial algorithm. 
\section{Extensions}

\subsection{Ranking with Groups of Intents and Correlated Contents (\RGC)}

Now we extend \RDC\ to the problem that each user $u$ may interest in more than one sets of topics $I_{u1},I_{u2},\mathellipsis,I_{up}$, and a user is satisfied if at least one of these groups is satisfied, where $p$ is at most polynomial of $n$. Same as \RDC, a set $I_{ui}$ is satisfied if $K_{ui}$ topics in $I_{ui}$ are covered.
This time we could change our relaxed LP as follows:
\[\begin{array}{lllll}
\text{minimize}&\sum\limits_{u\in \mathcal{U}}\sum\limits_{t=1}^n(1-y_{ut})&\\
\text{subject to}&\sum\limits_{t=1}^nx_{st}=1&\forall s\in \mathcal{S}\\
&\sum\limits_{s\in \mathcal{S}}x_{st}=1&\forall t\in [n]\\
&y_{ut}\le \max_i\{g_{uit}\}&\forall u\in \mathcal{U},t\in [n]\\
&z_{st}=\sum\limits_{t'=1}^tx_{st'}&\forall t\in [n]\\
&g_{uit}(K_{ui}-|F|)\le (K_{ui}-|F|)&\sum\limits_{s\in T_1(u,i,F)}z_{st}+\sum\limits_{e\in I_u\backslash F}\sum\limits_{s:e\in C_s,s\in T_2(u,i,F)}z_{st}\\
&&\forall i\in[p],u\in \mathcal{U}, t\in [n], F\subseteq I_{ui},|F|\le K_{ui}\\
&x_{st},y_{ut},z_{st},g_{uit}\in [0,1]&\forall s\in \mathcal{S}, u\in \mathcal{U}, t\in [n], i\in[p]
\end{array}\]
where $T_1(u,i,F)$ is the set of all documents that cover at least $K_{ui}-|F|$ objects in $I_{ui}\backslash F$, $T_2(u,i,F)=\mathcal{S}\backslash T_1(u,i,F)$, and $g_{uit}$ indicates if for user $u$, group $i$ is satisfied after time $t$.

The algorithm and analysis are almost the same as in \RDC, so we won't talk about it more. Finally we get Theorem \ref{thm:RGC}.
\begin{theorem}
\label{thm:RGC}
Suppose for any $F\subseteq \mathcal{E}$, there is a polynomial time algorithm that can produce a solution for $\SC(F)$ whose cost is at most $\rho$ times the optimal value of $\LP(\SC(F))$. There is a polynomial time factor $O(\rho)$ approximation algorithm for \RGC.
\end{theorem}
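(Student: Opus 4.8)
The plan is to reduce the analysis of \RGC\ to that of \RDC\ almost verbatim, since the rounding algorithm itself is unchanged, and to isolate the one genuinely new ingredient (choosing the witnessing group). First I would verify that the displayed program is a genuine relaxation, i.e.\ the analogue of Lemma~\ref{lem:OPTeqRDC}: given any integral ordering, set $g_{uit}$ to the indicator that group $i$ of user $u$ is satisfied by time $t$ and $y_{ut}$ to the indicator that $u$ is satisfied by time $t$. Each per-group knapsack constraint holds by exactly the case analysis used for \eqref{eq:6} in \RDC\ (applied to $I_{ui},K_{ui}$ in place of $I_u,K_u$), and since a user is satisfied iff some group is, we have $y_{ut}=\max_i g_{uit}$ so the constraint $y_{ut}\le\max_i g_{uit}$ holds with equality. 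Hence $\LPOpt\le\OPT$. Monotonicity of $z^*$ in $t$ propagates to $g^*$ and then to $y^*$, so the bound $\LPOpt\ge\frac12\sum_u t_u^*$ with $t_u^*=\max\{t:y_{ut}^*\le\frac12\}$ carries over unchanged.

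Second, I would keep the rounding scheme and the overflow bound untouched. The rounding in Section~\ref{subsec:analysis} only reads off $z^*$ (to form $P_k$, to run \textbf{AlgoSC} on $\SC(P_k)$, and to set the independent inclusion probabilities $\min\{1,50z_{st}^*\}$); it never inspects $y$ or $g$. Consequently $|H_k|\le\rho\cdot 2^k$ and $\mathbb{E}[|G_k|]\le 50\cdot 2^k$ hold verbatim, and Lemma~\ref{lem:overflow} gives that round $k$ overflows with probability at most $0.03$ with no change at all.

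Third, the only new step, I would adapt Lemma~\ref{lem:notsatisfy} by choosing, for each user and round, the group that the fractional solution certifies. Fix $u$ and a non-overflowed round $k$ with $2^k\ge t_u^*$. Monotonicity gives $y_{u,2^k}^*\ge\frac12$, so the constraint $y_{ut}\le\max_i g_{uit}$ yields a group $i^\ast$ (depending on $u$ and $k$) with $g_{ui^\ast,2^k}^*\ge\frac12$. The group-$i^\ast$ knapsack constraint is formally identical to \eqref{eq:6} with $g_{ui^\ast t}$, $I_{ui^\ast}$, $K_{ui^\ast}$ playing the roles of $y_{ut}$, $I_u$, $K_u$, so the entire proof of Lemma~\ref{lem:notsatisfy} (the two cases, the Chernoff bound of Lemma~\ref{lem:chernoff1}, and the submodular concentration of Lemma~\ref{lem:chernoff2}) applies to group $i^\ast$ and shows $H_k\cup G_k$ covers $K_{ui^\ast}$ topics of $I_{ui^\ast}$, hence satisfies group $i^\ast$ and therefore $u$, with probability at least $1-0.023$. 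That the witnessing group may differ from round to round is harmless: in each round $k\ge\lceil\log t_u^*\rceil$ we independently obtain a fresh witness, so the per-round failure probability is still below $1-(1-0.023)(1-0.03)<0.053$, and summing the same geometric series as in the proof of Theorem~\ref{thm:main} gives expected total satisfying time $O(\rho\cdot\OPT)$.

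The main obstacle I expect is not the analysis but \emph{solving} the relaxation: unlike every constraint in \RDC, the coupling $y_{ut}\le\max_i g_{uit}$ is a disjunction whose feasible region is non-convex, so the ellipsoid/separation approach of Section~\ref{subsec:analysis} does not apply out of the box (convexifying the disjunction would let several partially covered groups push $y_{ut}$ above $\frac12$ with no single group reaching $\frac12$, which is exactly the fact the previous paragraph relies on). I would handle this by extending the \emph{lazy} separation oracle already used for \RDC: since the analysis only ever invokes constraints at the $\lceil\log n\rceil+1$ scales $t=2^k$ with $F=P_k\cap I_{ui}$, it suffices to produce a solution validating those polynomially many inequalities together with, for each user and scale, a single group attaining $y_{u,2^k}\le g_{ui,2^k}$. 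The delicate point, and the place I would spend the most effort, is to do this while keeping the objective at most $\OPT$ despite the disjunction; here the hypothesis that $p$ is polynomial in $n$ is what one must exploit, e.g.\ by committing, per (user, scale), to the group the current iterate maximizes and arguing that the resulting solution still dominates the integral optimum on the constraints the rounding actually uses.
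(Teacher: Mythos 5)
Your reconstruction is, for the analysis itself, exactly the paper's proof---indeed it is far more detailed than what the paper offers. The paper's entire argument for Theorem~\ref{thm:RGC} consists of displaying the modified relaxation (per-group knapsack constraints plus the coupling $y_{ut}\le \max_i\{g_{uit}\}$) followed by the sentence that ``the algorithm and analysis are almost the same as in \RDC.'' Your three steps---validity of the relaxation via the same case analysis as for \eqref{eq:6}, the unchanged bound $|H_k|\le\rho\cdot 2^k$ and Lemma~\ref{lem:overflow}, and the extraction of a witness group $i^\ast$ with $g^*_{ui^\ast,2^k}\ge\frac12$ so that Lemma~\ref{lem:notsatisfy} applies verbatim with $(I_{ui^\ast},K_{ui^\ast})$ in place of $(I_u,K_u)$---are precisely the intended instantiation of that sentence, and they are correct, including the observation that the witness may change from round to round.

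The concern in your final paragraph, however, is genuine, and the paper does not address it at all: because of the disjunctive constraint $y_{ut}\le\max_i\{g_{uit}\}$, the displayed program is not a linear program and its feasible region is not convex, yet the paper implicitly reuses the ellipsoid argument of its Section on solving the LP without comment. Note that the difficulty is worse than you suggest. A point violating $y_{ut}\le g_{uit}$ for \emph{every} $i$ can still lie in the convex hull of integral solutions: if group $1$ and group $2$ of user $u$ are each satisfiable by a single distinct document, the midpoint of the two corresponding integral solutions has $y_{u1}=1$ and $g_{u1,1}=g_{u2,1}=\frac12$ (and with $p$ groups, $g_{ui,1}=\frac1p$ for each $i$). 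Hence no hyperplane separates such a point from the hull of integral solutions, so neither plain ellipsoid nor a round-or-cut scheme applies out of the box; your proposed remedy of committing, per (user, scale), to the group maximized by the current iterate suffers from the same problem, since the ellipsoid method needs separation from a \emph{fixed} convex body, not one that changes with the iterate. Likewise, the valid linear surrogate $y_{ut}\le\sum_i g_{uit}$ only yields a witness at level $\frac{1}{2p}$, which destroys the constants in Lemma~\ref{lem:notsatisfy} for general polynomial $p$. So what you have identified is not a gap in your reconstruction relative to the paper, but a gap in the paper itself (and the same issue afflicts the \RXOS\ extension, whose relaxation uses the identical coupling constraint); the analysis you and the paper give is sound for any solution with objective at most $\OPT$ that satisfies the constraints the rounding reads, but neither you nor the paper exhibits a polynomial-time procedure producing such a solution.
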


\subsection{Ranking with XOS Valuations (\RXOS)}

Notice that all the problems we have mentioned in this paper are special cases of \RSI, where the users' satisfying functions are submodular functions of the set of documents.
There is another family of valuations called XOS. An XOS function is a set function which is the maximum of several additive set functions. An additive set function $f:\{0,1\}^{\mathcal{S}}\rightarrow \mathbb{R}^+\cup \{0\}$ has the form $f(F)=\sum\nolimits_{s\in F}A_s$, $\forall F\subseteq S$, where $A_s$ is a constant associated with each element $s\in S$.
Since the family of submodular functions is contained in XOS \cite{lehmann2006combinatorial}, \RSI\ is a special case of \RXOS.\footnote{The number of additive set functions which are needed to represent a submodular function may be exponential}
Suppose for each user, the XOS function contains only polynomial number of additive set functions which are non-negative. We can give an O(1)-approximation algorithm. (If the number of additive set functions is exponential, the best approximate rate we can hope for is $O(\log n)$ since \RXOS\ generates \RSI.)

For each user $u$, suppose the additive set functions be $f_{ui},i=1,2,\ldots,p$, where $f_{ui}(F)=\sum\nolimits_{s\in F}A_{uis}$ for $F\subseteq \mathcal{S}$. Without lose of generality, we can let the satisfying time of user $u$ be $t_u=\min\{t\mid\max_i{f_{ui}(\text{the first }t\text{ selected documents})}\ge1\}$. Now we could have the following LP relaxation:
\[\begin{array}{lllll}
\text{minimize}&\sum\limits_{u\in \mathcal{U}}\sum\limits_{t=1}^n(1-y_{ut})&\\
\text{subject to}&\sum\limits_{t=1}^nx_{st}=1&\forall s\in \mathcal{S}\\
&\sum\limits_{s\in \mathcal{S}}x_{st}=1&\forall t\in [n]\\
&y_{ut}\le \max_i\{g_{uit}\}&\forall u\in \mathcal{U},t\in [n]\\
&z_{st}=\sum\limits_{t'=1}^tx_{st'}&\forall t\in [n]\\
&g_{uit}(1-\sum\limits_{s\in F}A_{uis})\le (1-\sum\limits_{s\in F}A_{uis})&\sum\limits_{s\in T_1(u,i,F)}z_{st}+\sum\limits_{s\in T_2(u,i,F)}A_{uis}z_{st}\\
&&\forall i\in[p], u\in \mathcal{U}, t\in [n], F\subseteq \mathcal{S},\sum\limits_{s\in F}A_{uis}\le 1\\
&x_{st},y_{ut},z_{st},g_{uit}\in [0,1]&\forall s\in \mathcal{S}, u\in \mathcal{U}, t\in [n], i\in[p]
\end{array}\]
where $T_1(u,i,F)=\{s\mid A_{uis}\ge(1-\sum\nolimits_{e\in F}A_{uie}),s\in \mathcal{S}\}$, $T_2(u,i,F)=\mathcal{S}\backslash T_1(u,i,F)$, and $g_{uit}$ indicates if $f_{ui}(\{\text{the first }t\text{ selected documents}\})\ge1$.

This time we do not need to consider the set cover instances, and the $k$-th round of our algorithm can be:
\begin{itemize}
\item Let $t=2^k$, $G_k=\emptyset$.
\item For each $s\in \mathcal{S}$, add document $s$ to $G_k$ independently with probability $\min\{1,50z_{st}^*\}$
\item If there are more than $70\cdot 2^k$ documents in $G_k$, we say this round is "overflowed" and select nothing, else we select all the documents in $G_k$ in arbitrary order in this round.
\end{itemize}

All the discussions are the same as in section \ref{subsec:analysis} except there is no $H_k$, and in the case that $\sum\nolimits_{s\in T_2(u,i,F)}A_{uis}z_{st}\ge\frac{3}{10}(1-\sum\nolimits_{s\in F}A_{uis})$, we could have $\mathbb{E}\Bigl[\sum\nolimits_{s\in T_2(u,i,F)}A_{uis}\overline{z_{st}^*}\Bigr]\ge 15(1-\sum\nolimits_{s\in F}A_{uis})$ and use Lemma \ref{lem:chernoff2} directly.

Finally we can have the following theorem:
\begin{theorem}
Suppose for each user, the XOS function contains only polynomial additive set functions which are non-negative. There is an $O(1)$-approximation for RXOS.
\end{theorem}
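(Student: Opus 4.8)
The plan is to mirror the proof of Theorem~\ref{thm:main} almost verbatim, treating the \RXOS\ instance as a collection of unit-demand knapsack-cover problems glued together by the disjunction $y_{ut}\le\max_i g_{uit}$. Two structural features make this easier than \RDC: each additive function $f_{ui}$ is the simplest kind of monotone submodular function, with marginal value $A_{uis}$ on document $s$, so no underlying set-cover instance arises and the component $H_k$ (together with its cost $\rho\cdot 2^k$) disappears; and since there are only polynomially many additive functions per user, we may afford to reason about a single ``responsible'' function for each user in each phase. Consequently the final bound carries no $\rho$ factor and is a genuine absolute constant. First I would verify that the displayed LP is a relaxation: for an integral ordering and any $i,u,t,F$ with $\sum_{s\in F}A_{uis}\le 1$, if $g_{uit}=1$ then $f_{ui}$ is already met by time $t$, and the standard knapsack-cover validity argument (cap each document's contribution at the residual demand $1-\sum_{s\in F}A_{uis}$) shows the corresponding constraint holds; the constraint $y_{ut}\le\max_i g_{uit}$ merely records that $u$ is satisfied once some group is. As in Section~\ref{subsec:analysis} this yields $\LPOpt\ge\frac12\sum_u t_u^*$ with $t_u^*=\max\{t:y_{ut}^*\le\frac12\}$.

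Next I would reprove the two probabilistic lemmas for the new rounding scheme. For the analogue of Lemma~\ref{lem:notsatisfy}, fix $u$ and a non-overflowed phase $k$ with $2^k\ge t_u^*$. Because $y_{u,2^k}^*\ge\frac12$ and $y_{u,2^k}^*\le\max_i g_{u,i,2^k}^*$, some group $i^*$ has $g_{u,i^*,2^k}^*\ge\frac12$; I fix this group and work with its knapsack-cover constraint, taking $F$ to be the set of \emph{deterministically selected} documents, namely those with $50z_{s,2^k}^*\ge 1$. If $\sum_{s\in F}A_{ui^*s}\ge 1$ the user is already satisfied, so assume a positive residual $R=1-\sum_{s\in F}A_{ui^*s}$. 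The constraint then forces $R\sum_{s\in T_1}z^*+\sum_{s\in T_2}A_{ui^*s}z^*\ge\frac12 R$, where $T_1=T_1(u,i^*,F)$ and $T_2=T_2(u,i^*,F)$, so either the ``heavy'' documents satisfy $\sum_{s\in T_1}z^*\ge\frac15$ or the ``light'' ones satisfy $\sum_{s\in T_2}A_{ui^*s}z^*\ge\frac3{10}R$. In the first case each heavy document (combined with $F$) meets the demand, and since the expected number of selected heavy documents is at least $50\cdot\frac15=10$, the lower-tail bound of Lemma~\ref{lem:chernoff1} makes the probability that none is selected a small constant. In the second case the collected light weight $\sum_{s\in T_2}A_{ui^*s}\overline{z_{st}^*}$ is an additive, hence monotone submodular, function whose marginals lie in $[0,1]$ and whose expectation is at least $50\cdot\frac3{10}R=15R$; Lemma~\ref{lem:chernoff2} then bounds the probability that fewer than $R$ units are collected by the same constant as in the proof of Lemma~\ref{lem:notsatisfy}. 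Either way $u$ is satisfied with constant probability.

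For the analogue of Lemma~\ref{lem:overflow} the argument is simpler: with no $H_k$, $\mathbb{E}[|G_k|]=\sum_{s}\min\{1,50z_{s,2^k}^*\}\le 50\cdot 2^k$, so the upper-tail bound of Lemma~\ref{lem:chernoff1} gives $\Pr[|G_k|>70\cdot 2^k]<0.03$, the overflow threshold now being $70\cdot 2^k$ rather than $(70+\rho)2^k$. Combining the two lemmas exactly as in the proof of Theorem~\ref{thm:main}—a user surviving a phase with $2^k\ge t_u^*$ does so with probability below a fixed constant, and the per-phase satisfying time grows geometrically—yields $\mathbb{E}\bigl[\sum_u\textbf{Satisfy}(u)\bigr]=O\bigl(\sum_u t_u^*\bigr)=O(\LPOpt)=O(\OPT)$ with an absolute hidden constant.

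The hard part will be the bookkeeping around documents that are heavy in $z^*$ but light in weight. The factor-$50$ blow-up that powers the concentration step is available only for documents rounded with probability exactly $50z^*<1$; a document with $50z^*\ge 1$ contributes only its deterministic weight $A_{ui^*s}$, not $50A_{ui^*s}z^*$. Choosing $F$ to be precisely the deterministically selected set is what forces every document left to the random part to satisfy $50z^*<1$, so that the bound $\mathbb{E}[\sum_{s\in T_2}A_{ui^*s}\overline{z_{st}^*}]\ge 15R$ genuinely holds; one must check that the knapsack-cover budget $\sum_{s\in T_2}A_{ui^*s}z^*\ge\frac3{10}R$ (and likewise the budget $\sum_{s\in T_1}z^*\ge\frac15$ in the heavy case) is realized by these truly random documents and not wasted on already-committed ones, which is exactly the role played by the separate treatment of $T_1$ and $T_2$ inherited from the \RDC\ LP. A secondary technical point is solving the LP: the constraint $y_{ut}\le\max_i g_{uit}$ is non-convex, but since $p$ is polynomial and the analysis only ever invokes the $O(\log n)$ knapsack-cover inequalities actually used (one group, one set $F$, one $t=2^k$ per phase), one can apply the sparse-separation/ellipsoid trick of Section~\ref{subsec:analysis} group by group, exactly as for \RDC.
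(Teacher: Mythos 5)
Your proposal follows the paper's intended route (same LP, the same $H_k$-free rounding, the same heavy/light dichotomy resolved by Lemmas \ref{lem:chernoff1} and \ref{lem:chernoff2}), and you correctly isolate the one delicate point --- but your resolution of that point is wrong, and this is a genuine gap. You claim that taking $F$ to be the deterministically selected documents forces every document ``left to the random part'' to have $50z^*<1$, and that the $T_1/T_2$ split keeps the knapsack budget from being ``wasted on already-committed'' documents. Neither claim holds: in the \RXOS\ LP, $T_1(u,i,F)$ and $T_2(u,i,F)$ partition \emph{all} of $\mathcal{S}$ by weight ($A_{uis}\ge R$ versus $A_{uis}<R$, where $R=1-\sum_{s\in F}A_{uis}$), so both right-hand sums may contain documents of $F$ itself. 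Such a document is selected with probability $1$, but its weight is already counted inside $1-R$ and contributes nothing toward the residual demand $R$, yet it can carry the entire fractional budget. Concretely: one user, one additive function, ten documents of weight $1/10$ each, and $z^*$ the integral schedule $s_1,\dots,s_{10}$. This is an optimal LP solution (the $F=\emptyset$ constraint forces $y_{ut}\le t/10$ for every feasible solution, and this one attains it), so $t_u^*=5$. At round $k=3$ (i.e. $t=8$) your $F=\{s_1,\dots,s_8\}$, $R=1/5$, $T_1=\emptyset$, and the light-case budget $\sum_{s\in T_2}A_{us}z^*_{s,8}=4/5\ge\frac{3}{10}R$ is carried entirely by documents of $F$; the truly random part has expected weight $0$, so the user is satisfied in this round with probability $0$, not $1-0.023$. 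Hence your claimed bound $\mathbb{E}\bigl[\sum_{s\in T_2}A_{ui^*s}\overline{z_{st}^*}\bigr]\ge 15R$ fails, the heavy-case claim (``each heavy document combined with $F$ meets the demand'') fails whenever the heavy document lies in $F$, and the analogue of Lemma \ref{lem:notsatisfy} is simply false for this LP.

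In fairness, the paper's own two-line argument asserts exactly the same expectation bound and shares this flaw; the reason the corresponding step is sound in \RDC\ is that there $F$ is a set of \emph{topics} and the right-hand side of \eqref{eq:6} counts only coverage of $I_u\setminus F$, and any document with $50z^*_{st}\ge1$ covers only topics of $P_k\supseteq F\cap I_u$, so committed mass can never re-enter the budget. The repair for \RXOS\ is to use the genuine knapsack-cover constraints, summing only over $\mathcal{S}\setminus F$: require $g_{uit}R\le R\sum_{s\in T_1\setminus F}z_{st}+\sum_{s\in T_2\setminus F}A_{uis}z_{st}$ (equivalently, define $T_1,T_2$ as a partition of $\mathcal{S}\setminus F$). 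These constraints are still valid for integral solutions, and with your choice of $F$ every document appearing on the right-hand side is rounded with probability exactly $50z^*_{st}<1$, so the factor-$50$ amplification, both Chernoff arguments, the overflow lemma, and the geometric summation go through verbatim; in the example above the corrected constraint forces $g^*_{u,8}=0$, pushing $t_u^*$ past $8$ and dissolving the contradiction. Without this strengthening of the LP (or some argument you do not supply), the proposal does not prove the theorem.
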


\section{Final Remarks}
As we mentioned in the introduction, 
the real world document-topic instance do not form arbitrary set system
and may be easier to approximate than the general combinatorial set cover problem.
There is a huge literature on algorithms for classifying or clustering the documents and modeling document-topic relations.
Many of those works leverage the underlining special structure of the document-topic instance to achieve good classification or clustering. It is an interesting further direction to explore the connections to those works and see whether 
the assumptions made or the special structures used in those works would translate to interesting set cover instances
that are easier to approximate.

We could extend our model in several ways to capture other factors
that may affect the search result.
For example, we can capture that each user only has limited patience in the following variant.
For each user $i$, we have a patience level $t_{i}$ that the maximum number of documents
user $i$ will examine. If the user does not get a relevant document after examining $t_{i}$ documents, she will leave the system. Now the objective is to maximize the expected number of satisfied users. This generalizes the traditional scheduling problem with deadlines. We can also incorporate uncertainty into the user preferences.
Namely, a user is interested in a particular document with a certain probability.
The resulting stochastic version of the problem
has a similar flavor with the sequential trial optimization defined in \cite{cohen2003efficient} or
the stochastic matching problem in \cite{chen2009approximating,bansal2012lp}.

Finally, we note that our approximation algorithm is mainly of theoretical interests since we need to use
the ellipsoid algorithm to solve a linear program with exponential constraints, which is computationally expensive
in practice. Hence, developing more efficient
algorithms for \RDC\ (even with worse performance guarantee) is of great practical interests.
\vspace{-0.1cm}
\subparagraph*{Acknowledgements}

Jian Li would like thank Yossi Azar for a stimulating discussion about the \RDC\ model
and for providing the manuscript \cite{azar2010rankingunrelated}.
\vspace{-0.1cm}
\bibliographystyle{plain}
\bibliography{ranking}
\end{document}